\theoremstyle{plain}
\newtheorem{theorem}{Theorem} 
\newtheorem{proposition}{Proposition}
\newtheorem{lemma}{Lemma}
\theoremstyle{remark}
\newtheorem{remark}{Remark}
\theoremstyle{definition}
\newtheorem{definition}{Definition}
\newtheorem{example}{Example}
\begin{document}

\title{Subjective Distortion: Achievability and Outer Bounds for Distortion Functions with Memory} 

\author{%
  \IEEEauthorblockN{Hamidreza Abin, Amin Gohari}
  \IEEEauthorblockA{Dept. of Information Engineering\\
                    The Chinese University of Hong Kong\\
                    Sha Tin, NT, Hong Kong}
  \and
  \IEEEauthorblockN{Andrew W.~Eckford}
  \IEEEauthorblockA{Dept.~of Electrical Engineering and Computer Science\\ 
                    York University\\
                    Toronto, Canada}%
\thanks{This work was supported in part by the John Templeton Foundation as Project
\#63644.}%
}

\maketitle

\begin{abstract}
In some rate-distortion-type problems, the required fidelity of information is affected by past actions. As a result, the distortion function depends not only on the instantaneous distortion between a source symbol and its representation symbol, but also on past representations. In this paper, we give a formal definition of this problem and introduce both inner (achievable) and outer bounds on the rate-distortion tradeoff. We also discuss convexification of the problem, which makes it easier to find bounds. Problems of this type  arise in biological information processing, as well as in recommendation engines; we provide an example applied to a simplified biological information processing problem.

\end{abstract}


\section{Introduction}


Consider a viewer of a video streaming platform. The viewer has many options in terms of content {\em a priori}, but once they select the first episode of a series, the other episodes in that series become much more desirable to them. Alternatively, consider an animal that needs two nutrients to survive, food and water. Searching for food might satiate the animal's hunger, but lead to increasing thirst, changing the relative reward corresponding to food and water. In both of the above examples, the utility or desired fidelity for certain kinds of information is dependent on the user's own past information-gathering choices. Given this dependence, what is an optimal information-processing strategy?

These problems are examples of {\em subjective information} \cite{barker2022subjective}. In a subjective information problem, the semantics or value of information is significant, as in semantic or task-oriented communication \cite{gunduz2022beyond}, but these semantics are modified by the user's internal state, which in turn is affected by the user's past information-gathering or information-processing choices. Subjective information problems have been analyzed in terms of biological resource gathering, such as in cases where an organism has finite sensing capabilities but must split them to seek two different resources \cite{barker2023metric,barker2024fitness}. Time- and user-dependent preferences have also been studied in recommendation engines as the ``taste-distortion'' problem \cite{zhao2021rabbit}. Analogous problems occur in channel coding, e.g. \cite{koch2009channels}.


Rate-distortion theory \cite{berger1971,berger2002living} and semantic information theory \cite{bartlett2025physics} have been proposed as methodologies for biological communication problems, and we are particularly interested in these applications. 
There has been much recent work in the {\em fitness value of information} \cite{donaldson2010fitness,rivoire2011value}, applying rate-distortion and similar methodologies to explore optimal information processing and growth tradeoffs in evolutionary systems. The result that single-letter codes can achieve optimal rate-distortion tradeoffs \cite{gastpar2003code}, particularly in systems that optimize growth rate (a commonly used measure for evolutionary fitness \cite{wu2013interpretations}), implies that these tradeoffs may be achievable by organisms \cite{moffett2025kelly}. 
Semantic communication, meanwhile, has been studied in a rate-distortion context in conventional communication systems \cite{chai2023rate,stavrou2022rate,liu2021rate}.

While growth-rate analysis has received considerable attention in the biophysics literature, novel methodologies, such as subjective information, are required to describe complicated, time-varying biological phenomena beyond toy problems. In this paper, we generalize the rate-distortion framework to the subjective case where the distortion function is dependent on the most recent encoded symbol. Our main contribution is an achievability result and an outer bound, placing limits on the exact form of the rate-distortion tradeoff in these problems. We also give some brief examples illustrating how these results might be used in practical situations.

The remainder of the paper is organized as follows. In Section II, we introduce the rate-distortion problem setup. In Section III, we give our achievability results, and discuss the convexity of the solution space. In Section IV, we introduce outer bounds on the rate-distortion tradeoff. In Section V, we give a numerical example, and conclude in Section VI.

\section{Problem Setup and Scientific Motivation}
\label{sec:setup}

We consider a problem of the rate-distortion type. Consider an environment characterized by a state $X_i$ at each discrete time step $i$. We assume that the sequence of states $\{X_i\}_{i \geq 1}$ comprises independent and identically distributed (i.i.d.) random variables governed by the probability distribution $p_X(x)$. An agent operating within this environment takes an action $Y_i$ at time $i$. We assume that the agent's reward at any given time depends on both the current environmental state and the agent's action history. While this work focuses specifically on the dependence between the reward and the current and immediately preceding actions, our framework is readily generalizable to scenarios involving longer memory horizons.

Furthermore, rather than maximizing a reward function $r(x_i, y_i, y_{i-1})$, we formulate the problem as the minimization of an \emph{opportunity cost} $d(x_i, y_i, y_{i-1})$, defined as $d(x_i, y_i, y_{i-1}) = r_{\max} - r(x_i, y_i, y_{i-1})$, where $r_{\max}$ represents the maximum achievable reward. Assuming that the agent is active within a window of $n$ time steps, it seeks to minimize the expected cost:
\begin{equation}
\frac1n\sum_{i=1}^n\mathbb{E}[d(X_i, Y_i, Y_{i-1})],\label{eqnRD}
\end{equation}
where $Y_0$ is a fixed initial symbol.
We assume that the agent possesses prior side information about the environment. Moreover, at time $i$, the agent may gather further information through sensing or observing its reward/opportunity cost at earlier times $j<i$. Let $M$ denote the total collective information that the agent can obtain about the sequence $X^n$ (comprising initial information and observations during times $1\leq i\leq n$). We assume that the agent is information-limited; specifically, $I(M;X^n)\leq nR$, where $R$ denotes an upper limit on the rate of information transfer to the agent.

To study the fundamental limit of the impact of $R$ on the agent's distortion, we adopt an idealized assumption. We assume that the agent can query a genie for \emph{any} side information $M$ it desires (subject to the constraint $I(M;X^n)\leq nR$). The genie provides this information at the very beginning, allowing the agent to utilize it to generate the action sequence $(Y_1, Y_2, \dots, Y_n)$. 
Thus, while in practice the agent may gather information during time instances $1\leq i\leq n$, to upper-bound performance, we instead assume a genie provides all allowed information at time $0$ and no further observations occur. This idealized setting provides an upper bound on the performance of the real system, as it grants the agent significant freedom to select the type of information it requires. For instance, the agent could utilize its $nR$ bits of information quota to query the genie about states $X_i$ for $i<t$ (for some time $t$), while disregarding $X_i$ for $i>t$.

The described setting aligns naturally with the classical rate-distortion framework in information theory. Consider $X^n$ as the source and $M$ as the ``message'' transmitted to a decoder, which subsequently generates the output sequence $Y^n$. The objective is to minimize the total distortion, as defined in \eqref{eqnRD}. While classical theory typically assumes that $M \in \{0,1\}^{nR}$, the fundamental results remain valid under the relaxed constraint $I(M;X^n) \leq nR$.\footnote{To demonstrate this formally, we note that the channel $p_{M|X^n}$ can be synthesized using $I(M;X^n)$ bits of information in the presence of shared randomness. Furthermore, the lossy compression rate-distortion tradeoff remains unchanged whether shared randomness is available or not~\cite{cuff2013distributed}.}

In terms of physical systems that operate on this principle, we are motivated by encoders that physically change in order to generate output symbols, where the change leads to a cost; an example is given in Section \ref{sec:Examples}.
For simplicity, we consider distortion functions with a single step of memory that depend only on the previous decoder output, and we leave extensions of this framework to future work.

\section{Rate-Distortion Setup and Achievability}

\subsection{Setting}
\label{setting:1}

We assume the source is i.i.d. according to $p_X(x)$, i.e., $p_{X^n}(x^n) = \prod_{i=1}^{n} p_X(x_i)$.  
An encoder $\mathcal{E}$ maps each sequence $x^n \in \mathcal{X}^n$ to a message $m \in \mathcal{M} = \{1,2,\dots,2^{\lfloor nR \rfloor}\}$.  
A decoder $\mathcal{D}$ maps every message $m \in \mathcal{M} = \{1,2,\dots,2^{\lfloor nR \rfloor}\}$ to a sequence $y^n \in \mathcal{Y}^n$.

We consider a bounded distortion function:
\[
d(\cdot,\cdot,\cdot): \mathcal{X} \times \mathcal{Y} \times \mathcal{Y} \to \mathbb{R}.
\]
The average distortion between $x^n \in \mathcal{X}^n$ and $y^n \in \mathcal{Y}^n$ is denoted $\bar{\mathsf{d}}(x^n, y^n)$ and is defined as
\begin{align}
    \label{eqn:average-distortion}
  \bar{\mathsf{d}}(x^n,y^n) = \frac{1}{n} \Bigl[\sum_{i=1}^{n} d(x_i, y_i, y_{i-1}) \Bigr]
\end{align}
where $y_0$ is some fixed symbol in $\mathcal{Y}$.

\begin{definition}
    \label{defn:1}
   For a given source $p_X$, rate $R$, and distortion $D$, we say that $(R,D)$ is \emph{achievable} if there exists a sequence of encoder maps $\mathcal{E}_n$ and decoder maps $\mathcal{D}_n$ such that
   \begin{align}
        \label{eqn:limsup}
       \limsup_{n \to \infty} \mathbb{E}\Bigl[ \bar{\mathsf{d}}\bigl( X^n, \mathcal{D}_n \circ \mathcal{E}_n(X^n) \bigr) \Bigr] \leq D.
   \end{align}
\end{definition}
We denote the rate-distortion function by $R(D)$, defined as the infimum of all achievable rates for a given distortion $D$.
\begin{remark} From the definition, the rate-distortion function $R(D)$ is non-increasing with respect to $D$. Furthermore, the operational definition of $R(D)$, combined with a time-sharing argument, demonstrates that $R(D)$ is a convex function of $D$. See Appendix \ref{apx:Remark-1} for details.
\label{rmk1dd}
\end{remark}
\subsection{Achievability via stationarity of a Markov process}

A multi-letter characterization of the rate-distortion region can be obtained from information spectrum methods under the assumption that the distortion function is uniformly integrable (see \cite[Theorem 5.5.1]{HanBook}). The theorem (adapted to our setup) implies that
\begin{theorem}\cite[Theorem 5.5.1]{HanBook} For a uniformly integrable distortion function, we have
$$R(D)=\inf_{\mathbf{Y}:d(\mathbf{X},\mathbf{Y})\leq D}\bar{I}(\mathbf{X};\mathbf{Y})$$
    where $\mathbf{X}=(X_1,X_2, \cdots)$ is an i.i.d.\ infinite-length source sequence, $\mathbf{Y}=(Y_1,Y_2, \cdots)$ is an arbitrary (infinite-length) reconstruction (jointly distributed with  $\mathbf{X}$),  $$d(\mathbf{X},\mathbf{Y})=\limsup_{n\rightarrow\infty}\mathbb{E}[\bar{\mathsf{d}}(X^n,Y^n)]$$
    and
    $$\bar{I}(\mathbf{X};\mathbf{Y})=
    \text{$p$-}\limsup_{n\rightarrow\infty}
    \frac1n\log\frac{p(X^n,Y^n)}{p(X^n)p(Y^n)}.$$
    \label{thmspect}
\end{theorem}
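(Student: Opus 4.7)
The plan is to recognize this statement as a direct specialization of Han's general information-spectrum rate-distortion theorem \cite[Thm.~5.5.1]{HanBook}, so that the work reduces to verifying that our setup meets its hypotheses and that the memory structure of $d$ is harmless. First, since Section~\ref{sec:setup} assumes $d(x,y,y')$ is bounded, the sequence $\bar{\mathsf{d}}(X^n,Y^n)$ is uniformly bounded for every reconstruction process, hence trivially uniformly integrable; Han's hypothesis holds. Second, Han's theorem is stated for an arbitrary measurable sequence $d_n:\mathcal{X}^n\times\mathcal{Y}^n\to\mathbb{R}$, and $\bar{\mathsf{d}}$ is simply one such sequence; the dependence of $d$ on $y_{i-1}$ is absorbed into $d_n$ and is never single-letterized in the proof.

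For achievability, I would fix any process $\mathbf{Y}$ with $\bar{I}(\mathbf{X};\mathbf{Y})<R$ and $d(\mathbf{X},\mathbf{Y})\leq D$, and apply the standard information-spectrum covering construction: draw $\lfloor 2^{n(R+\gamma)}\rfloor$ candidate reconstructions $\tilde{Y}^n(m)$ i.i.d.\ from the marginal $p_{Y^n}$ induced by $\mathbf{Y}$, and encode $x^n$ by the first index $m$ whose joint information density $\tfrac{1}{n}\log\tfrac{p(x^n,\tilde{Y}^n(m))}{p(x^n)p(\tilde{Y}^n(m))}$ falls below $\bar{I}(\mathbf{X};\mathbf{Y})+\gamma$. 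The $p$-$\limsup$ definition of $\bar{I}$ guarantees that a matching codeword exists with probability tending to one, and on that event $(X^n,\tilde{Y}^n(m))$ is distributed exactly as $(X^n,Y^n)$, so $\bar{\mathsf{d}}(X^n,\tilde{Y}^n(m))$ has the same distribution as $\bar{\mathsf{d}}(X^n,Y^n)$ --- memory and all. Uniform integrability then lifts the probabilistic distortion bound to a bound on expectation, yielding $\limsup_n \mathbb{E}[\bar{\mathsf{d}}]\leq D$.

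For the converse, I would take any sequence of codes $(\mathcal{E}_n,\mathcal{D}_n)$ of rate $R$ achieving distortion $\leq D$, set $Y^n:=\mathcal{D}_n\circ\mathcal{E}_n(X^n)$, and observe that $Y^n$ takes at most $2^{\lfloor nR\rfloor}$ distinct values. A short counting argument then gives $\Pr\!\bigl(\tfrac{1}{n}\log\tfrac{1}{p(Y^n)}>R+\epsilon\bigr)\leq 2^{-n\epsilon}$, which implies $p$-$\limsup \tfrac{1}{n}\log\tfrac{1}{p(Y^n)}\leq R$ and hence, after handling $\tfrac{p(Y^n\mid X^n)}{p(Y^n)}$, that $\bar{I}(\mathbf{X};\mathbf{Y})\leq R$. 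The distortion constraint supplies $d(\mathbf{X},\mathbf{Y})\leq D$ directly, so $\mathbf{Y}$ realizes the infimum at rate $R$.

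The place where one might worry --- and thus the main conceptual obstacle --- is whether the memory in $d$ invalidates either the covering or the converse. It does not: the covering never decomposes $\bar{\mathsf{d}}$ into single-letter contributions, and the converse uses only the cardinality of the message alphabet together with the tail bound on $\log 1/p(Y^n)$, neither of which interacts with the structure of $d$. Thus the only real content is to import Han's proof with $d_n=\bar{\mathsf{d}}$; no new estimates are needed beyond the trivial observation that boundedness implies uniform integrability.
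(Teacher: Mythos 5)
The paper does not supply its own proof of this theorem: it simply cites \cite[Theorem~5.5.1]{HanBook} and records that the hypothesis (uniform integrability) holds because $d$ is bounded. Your proposal correctly identifies the same strategy and adds the key observation that Han's result is stated for an \emph{arbitrary} sequence of fidelity criteria $d_n$, so the memory in $d(x_i,y_i,y_{i-1})$ is invisible to the proof once it has been packaged into $\bar{\mathsf{d}}(x^n,y^n)$. That observation is the real content here, and you get it right.

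The sketch you give of Han's achievability, however, contains a genuine gap. You select $m$ purely by the condition that $\tfrac1n\log\tfrac{p(x^n,\tilde Y^n(m))}{p(x^n)p(\tilde Y^n(m))}$ is small, and then assert that on the success event $(X^n,\tilde Y^n(m))$ is ``distributed exactly as $(X^n,Y^n)$.'' This is false: conditioning on being the first codeword to satisfy an information-density threshold biases the conditional law of $\tilde Y^n(m)$ given $X^n$, so it is \emph{not} $p_{Y^n\mid X^n}$ (in particular, codewords with very low information density are over-represented, and nothing controls their distortion). Han's actual covering argument avoids this by building the distortion requirement into the encoding rule --- one selects $m$ only if \emph{both} the information density is below $\bar I(\mathbf X;\mathbf Y)+\gamma$ \emph{and} $\bar{\mathsf{d}}(x^n,\tilde Y^n(m))\le D+\epsilon$, and then shows that such an $m$ exists with probability tending to $1$. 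On the success event the distortion is then bounded by construction, and uniform integrability only has to handle the vanishing error event, not a supposed distributional identity. The converse sketch is fine in spirit (cardinality bound on $Y^n$ plus the Verd\'u--Han bound), though the phrase ``after handling $p(Y^n\mid X^n)/p(Y^n)$'' hides the one nontrivial step. In short: your reduction to Han's theorem and the ``memory is harmless'' point match the paper's treatment; the internal achievability argument you reconstruct is not quite Han's and would not close as written.
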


The uniformly integrable condition holds, for example, when $\mathcal{X}$ is a finite set. However, such multi-letter characterizations are generally computationally intractable. Therefore, we focus on deriving computable single-letter bounds.

\begin{theorem}\label{fthm1}
Consider a Markov chain kernel
\[
W_{X,Y|\hat{X},\hat{Y}}(x,y | \hat{x},\hat{y}) = p_X(x) \, W_{Y|X,\hat{Y}}(y | x, \hat{y})
\]
defined on the product space $\mathcal{X} \times \mathcal{Y}$. Let $\pi_{X,Y}$ denote a stationary distribution induced by this kernel. Assume that under the joint distribution
$\pi_{X,Y}(\hat{x},\hat{y}) \, W(x,y | \hat{x},\hat{y})$, the expected distortion satisfies $\mathbb{E}[d(X, Y, \hat{Y})] \leq D$.
Then, the following rate is achievable:
$$R=H(X)+H(Y|\hat Y)-H(X,Y|\hat X, \hat Y)$$
where the above is computed under $\pi_{X,Y}(\hat{x},\hat{y}) \, W(x,y | \hat{x},\hat{y})$.
\end{theorem}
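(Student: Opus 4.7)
\emph{Proof proposal.} The plan is to apply the information-spectrum characterization of Theorem \ref{thmspect} to the stationary extension of the Markov chain with kernel $W$. First I would simplify the target rate. The factorization $W(x,y|\hat x,\hat y)=p_X(x)W_{Y|X,\hat{Y}}(y|x,\hat y)$ implies $p(x|\hat x,\hat y)=p_X(x)$ and $p(y|x,\hat x,\hat y)=W_{Y|X,\hat{Y}}(y|x,\hat y)$; the chain rule then gives $H(X,Y|\hat X,\hat Y)=H(X)+H(Y|X,\hat Y)$ under $\pi_{X,Y}\otimes W$, so the claimed rate collapses to $R=H(Y|\hat Y)-H(Y|X,\hat Y)=I(X;Y|\hat Y)$.

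Next I would set up the coupled process: let $\{(X_i,Y_i)\}_{i\geq 0}$ be the Markov chain with kernel $W$ started from $\pi_{X,Y}$. Because the $X$-transition of $W$ is $p_X(x)$ regardless of the previous state, $\{X_i\}_{i\geq 1}$ is i.i.d.\ $p_X$, so this is a valid coupling with the source. The triple $(X_i,Y_{i-1},Y_i)$ is stationary with one-dimensional law $p_X(x)\pi_Y(\hat y)W_{Y|X,\hat{Y}}(y|x,\hat y)$, and the hypothesis $\mathbb{E}[d(X,Y,\hat Y)]\leq D$ together with stationarity yields $\mathbb{E}[\bar{\mathsf{d}}(X^n,Y^n)]\leq D$ for every $n$ (the mismatch between the operational fixed $y_0$ and a random $Y_0\sim\pi_Y$ affects only one summand and contributes $O(1/n)$).

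For the information-spectrum rate, observe that $(X^\infty,Y^\infty)$ is a stationary finite-alphabet Markov chain, its $Y$-marginal is Markov with kernel $W_Y(y|\hat y)=\sum_x p_X(x)W_{Y|X,\hat{Y}}(y|x,\hat y)$, and $X^\infty$ is i.i.d. Applying the Shannon--McMillan--Breiman theorem to each of these stationary sources gives
\begin{align*}
-\tfrac{1}{n}\log p(X^n,Y^n)&\to H(X,Y|\hat X,\hat Y),\\
-\tfrac{1}{n}\log p(X^n)&\to H(X),\\
-\tfrac{1}{n}\log p(Y^n)&\to H(Y|\hat Y),
\end{align*}
all in probability under $\pi_{X,Y}\otimes W$. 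Combining these yields $\bar I(X^\infty;Y^\infty)=I(X;Y|\hat Y)=R$, and Theorem \ref{thmspect} then delivers achievability of $(R,D)$.

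The main obstacle I anticipate is non-ergodicity of $\pi_{X,Y}$: Shannon--McMillan convergence holds within each ergodic component, but the $p$-$\limsup$ in Theorem \ref{thmspect} could a priori be dominated by the worst component. The remedy is an ergodic decomposition plus time-sharing. Writing $\pi_{X,Y}=\sum_k\alpha_k\pi^{(k)}$ as a mixture of ergodic invariant measures, each component yields an achievable $(R_k,D_k)$ by the argument above; both the rate formula and $\mathbb{E}[d]$ are linear in the mixing weights (entropy \emph{rates} respect the ergodic decomposition and $\mathbb{E}[d]$ is an expectation), so the convexity of the achievable region (Remark \ref{rmk1dd}) lets us time-share to recover the claimed $(R,D)$. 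The uniform-integrability hypothesis needed to invoke Theorem \ref{thmspect} is automatic from the boundedness of $d$ stated in Section \ref{setting:1}.
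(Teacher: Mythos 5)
Your proof follows the same route as the paper's: construct the stationary Markov chain from kernel $W$ started at $\pi_{X,Y}$, observe that $X^n$ is i.i.d.\ $p_X$ (so it is a valid source coupling) and that $Y^n$ is itself Markov with kernel $\sum_x p_X(x)W_{Y|X,\hat Y}(\cdot\mid x,\cdot)$, and then plug into Theorem~\ref{thmspect} to identify the achievable rate with the entropy-rate difference $H(X)+H(Y|\hat Y)-H(X,Y|\hat X,\hat Y)$. The paper's proof is terser: it simply invokes ``specializing Theorem~5.5.1 to this Markov source'' to assert that the spectral quantity $\bar I(\mathbf X;\mathbf Y)$ collapses to $\lim_n\frac1n I(X^n;Y^n)$, without spelling out the Shannon--McMillan--Breiman step, without commenting on the one-term $O(1/n)$ discrepancy caused by the fixed boundary symbol $y_0$, and without addressing what happens when $\pi_{X,Y}$ is not ergodic. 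You fill all three gaps, and your treatment of non-ergodicity is the most substantive addition: you correctly note that the $p$-$\limsup$ in Theorem~\ref{thmspect} would otherwise be dominated by the worst ergodic component, and you repair this by ergodic decomposition plus time-sharing, relying on the facts that (i) the kernel is fixed across components so that the conditional-entropy terms are linear in the mixture weights, and (ii) the operational region is convex by Remark~\ref{rmk1dd}. Your side observation that the rate simplifies to $I(X;Y|\hat Y)$ (since the $X$-transition being $p_X$ regardless of state gives $H(X,Y|\hat X,\hat Y)=H(X)+H(Y|X,\hat Y)$) is correct and not stated in the paper, though it is not needed for the achievability argument. In short: same method, but you supply the rigor the paper leaves implicit, most notably around ergodicity.
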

\begin{proof}
    We use Theorem \ref{thmspect}. Let the sequence $(X^n, Y^n)$ be generated by the Markov chain with the given kernel $W$, initialized at the stationary distribution $\pi_{X,Y}$, i.e., $(X_1,Y_1)\sim \pi_{X,Y}$. By    
    specializing \cite[Theorem 5.5.1]{HanBook} to this Markov source, we obtain the following achievable rate:
\begin{align*}
   R &= \lim_{n \to \infty} \frac{1}{n} \, I(X^n; Y^n).
\end{align*}
Note that with the above construction of the Markov kernel, the sequence $Y^n$ will also become a Markov chain with the kernel 
\[
\sum_{x} p_X(x)\,W_{Y|X,\hat{Y}}(y|x,\hat{y}).
\]
Therefore,
\begin{align*}
   R &= \lim_{n \to \infty} \frac{1}{n} \, I(X^n; Y^n)
   \\&=H(X)+\widetilde{H}(Y) - \widetilde{H}(X,Y),
\end{align*}
where due to the Markov chain assumption
\begin{align*}\widetilde{H}(X,Y)&=H(X_2,Y_2|X_{1},Y_{1}),\\
    \widetilde{H}(Y)&=H(Y_2|Y_{1}).
\end{align*}

\end{proof}

The primary difficulty in applying the above theorem stems from the constraint 
\[
\mathbb{E}[d(X, Y, \hat{Y})] \leq D,
\]
which must be evaluated under the stationary joint distribution of \((X, Y, \hat{Y})\). This stationary distribution takes the form
\[
p_X(x) \, W(y | x, \hat{y}) \, \pi(\hat{y}),
\]
where \(\pi(\hat{y})\) denotes the stationary marginal distribution of \(\hat{Y}\).

Since \(\pi(\hat{y})\) is itself determined by the kernel \(W(y | x, \hat{y})\) through the stationary fixed-point equations, the product \(W(y | x, \hat{y}) \, \pi(\hat{y})\) becomes a nonlinear function of the kernel \(W(y | x, \hat{y})\). Consequently, the distortion constraint turns into a nonlinear condition on the choice of \(W(y | x, \hat{y})\), which complicates the optimization over achievable rates. 

Also, note that for some values of $D$, one may not be able to find a kernel \(W(y | x, \hat{y})\) for which \(
\mathbb{E}[d(X, Y, \hat{Y})] \leq D
\) holds. 

\begin{definition}
    Let $\mathtt{R}_{\mathrm{I}_1}(D)$ be the minimum rate that one can obtain from the construction in Theorem \ref{fthm1}. It should be clear from the foregoing that $R(D)\leq \mathtt{R}_{\mathrm{I}_1}(D)$. 
\end{definition}

\subsection{Achievability via memoryless kernels}

Next, we define $\mathtt{R}_{\mathrm{I}_2}(D)$ to be the minimum rate that one can obtain from the construction in Theorem \ref{fthm1} when we restrict to Markov kernels of the form
    $$W(x,y|\hat x,\hat y)=p_X(x)W(y|x).$$
    In other words, the Markov chain becomes memoryless. 

\begin{definition}
 The ``memoryless" rate-distortion function, denoted by \(\mathtt{R}_{\mathrm{I}_2}(D)\), is defined as
\begin{align}
    \mathtt{R}_{\mathrm{I}_2}(D)=\min_{W_{Y|X} \in \mathcal{W}_{D}} I(X;Y),\label{Racc:def}
\end{align}
where 
\begin{align}
    \mathcal{W}_{D}&= \Bigl\{ W_{Y|X} : \mathbb{E}_{Q_{X,Y,\hat{Y}}}[\,d(X,Y,\hat{Y})\,] \leq D,\nonumber\\&~~\quad
    Q_{X,Y,\hat{Y}}(x,y,\hat{y}) = p_X(x)W_{Y|X}(y|x)p_{Y}(\hat{y}) \Bigr\}.\label{def:W}
\end{align}
In \eqref{def:W}, distribution $p_{Y}(\hat{y})$ is defined as 
\begin{align}
    p_{Y}(\hat{y})=\sum_{x\in \mathcal{X}}p_{X}(x)W_{Y|X}(\hat{y}|x),~~\forall \hat{y}\in \mathcal{Y}.\label{def:p:Y:acc}
\end{align}
 \end{definition}
\begin{remark}
Since $\mathcal{W}_D$ is a subset of the kernels considered in Theorem \ref{fthm1}, $\mathtt{R}_{\mathrm{I}_1}(D) \leq \mathtt{R}_{\mathrm{I}_2}(D)$. However, working with $\mathtt{R}_{\mathrm{I}_2}$ is often more convenient, since the dependence of $\mathtt{R}_{\mathrm{I}_1}$ on its underlying kernel is implicit and more involved than the explicit kernel dependence of $\mathtt{R}_{\mathrm{I}_2}$.
\end{remark}

\begin{remark}
Note that the constraint in the definition of the set \(\mathcal{W}_{D}\) is quadratic with respect to \(W_{Y|X}\). Consequently, the set \(\mathcal{W}_{D}\) may be empty for certain values of \(D\), and in general it does not constitute a convex set. The convexity of \(\mathcal{W}_{D}\) follows if the function $\Lambda$ defined below is convex (level sets of a convex function are convex): 
\begin{align}
    \Lambda: W_{Y|X} \longmapsto \sum_{\substack{x,y,z,x' \in \\ \mathcal{X} \times \mathcal{Y} \times \mathcal{Y} \times \mathcal{X}}}
    &d(x,y,z)\; p_X(x)\,p_X(x')\times\nonumber\\ &W_{Y|X}(y|x)\,W_{Y|X}(z|x').\label{conv:eq}
\end{align}

Observe that the convexity of $\Lambda$ depends entirely on the source distribution $p_X$ and the distortion function $d(\cdot,\cdot,\cdot)$. Moreover, since $I(X;Y)$ is a convex function of $W_{Y|X}$ for a fixed source distribution $p_X$, if $\Lambda$ (and equivalently the set $\mathcal{W}_{D}$) is convex, then the minimization in \eqref{Racc:def} becomes a convex optimization problem.
\end{remark}

\begin{definition}
We define $D_{\text{min}}$ and $D_{\text{max}}$ by
\begin{align}
    D_{\text{min}} &\triangleq \min_{W_{Y|X}} \mathbb{E}_{Q_{X,Y,\hat{Y}}}\bigl[d(X,Y,\hat{Y})\bigr], \label{dist:QQ}\\
    D_{\text{max}} &\triangleq \min_{r_Y}  \mathbb{E}_{K_{X,Y,\hat{Y}}}\bigl[d(X,Y,\hat{Y})\bigr], \label{dist:K}
\end{align}
where $r_Y(y)$ is selected from the set of valid probability distributions on $Y$, and
\begin{align}
    Q_{X,Y,\hat{Y}}(x,y,\hat{y}) &= p_X(x)W_{Y|X}(y| x)p_{Y}(\hat{y}),\\
    K_{X,Y,\hat{Y}}(x,y,\hat{y}) &= p_X(x)r_{Y}(y) r_{Y}(\hat{y}).
\end{align}
Here, $p_Y$ is given in \eqref{def:p:Y:acc}.
\end{definition}
Note that $D_{\text{min}} \leq D_{\text{max}}$, and the optimization problem in \eqref{Racc:def} is infeasible for $D < D_{\text{min}}$, while for $D \geq D_{\text{max}}$ the value of $\mathtt{R}_{\mathrm{I}_2}(D)$ is zero (with the choice $W_{Y|X}(y|x)=r_Y(y)$). Thus, for $D \geq D_{\text{max}}$ we have $R(D) = \mathtt{R}_{\mathrm{I}_2}(D) = 0$. 

The quantities $D_{\text{min}}$ and $D_{\text{max}}$ depend entirely on the source distribution $p_X$ and the distortion function $d(\cdot,\cdot,\cdot)$.\,
For certain special choices of $p_X$ and $d(\cdot,\cdot,\cdot)$, it may happen that $D_{\text{min}} = D_{\text{max}}$, in which case $\mathtt{R}_{\mathrm{I}_2}(D)$ is equal to zero for all $D \geq D_{\text{min}}$. Moreover, it is immediate that $\mathtt{R}_{\mathrm{I}_2}(D)$ is a non-increasing function of $D$ for $D \geq D_{\text{min}}$. 
\begin{proposition}\label{racc:con:lem}
Let $\mathcal{X} = \mathcal{Y} = \{0,1\}$ and let $p_X$ be the uniform distribution on $\mathcal{X}$. 
Define
\begin{align}
    e_1 &\triangleq d(0,0,0) + d(0,1,1) - d(0,1,0) - d(0,0,1),\label{e1}\\
    e_2 &\triangleq d(1,0,0) + d(1,1,1) - d(1,1,0) - d(1,0,1).\label{e2}
\end{align}
Then, $\Lambda$ is a   convex function if and only if
\begin{align}
    e_1 = e_2 \geq 0.
\end{align}
\end{proposition}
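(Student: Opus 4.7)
The plan is to reduce the convexity of $\Lambda$ to a condition on a constant $2{\times}2$ Hessian, since the binary-binary kernel is parameterized by just two numbers and $\Lambda$ is at most quadratic in them. Specifically, I would write $a = W_{Y|X}(1|0)$ and $b = W_{Y|X}(1|1)$, so that $W_{Y|X}(0|0) = 1-a$ and $W_{Y|X}(0|1) = 1-b$, and the feasible set is the square $(a,b) \in [0,1]^2$. Substituting $p_X(0)=p_X(1)=1/2$ into the definition \eqref{conv:eq} gives
\begin{align*}
4\Lambda(a,b) = \sum_{x,x' \in \{0,1\}} \sum_{y,z \in \{0,1\}} d(x,y,z)\, A_x(y)\, A_{x'}(z),
\end{align*}
where $A_0(1)=a$, $A_0(0)=1-a$, $A_1(1)=b$, $A_1(0)=1-b$. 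Grouping the four $(x,x')$ summands shows that $\Lambda(a,b)$ is a polynomial of total degree two, with $a^2$ arising only from $(x,x')=(0,0)$, $b^2$ only from $(x,x')=(1,1)$, and $ab$ only from the mixed pairs $(0,1)$ and $(1,0)$.

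Next I would compute the constant Hessian of $4\Lambda$ directly. The diagonal block $(x,x')=(0,0)$ contributes $d(0,0,0)(1-a)^2 + [d(0,0,1)+d(0,1,0)]a(1-a) + d(0,1,1)a^2$, whose second derivative in $a$ is precisely $2e_1$, with $e_1$ as in \eqref{e1}. Symmetrically $(x,x')=(1,1)$ yields a second derivative in $b$ equal to $2e_2$ from \eqref{e2}. The mixed blocks $(0,1)$ and $(1,0)$ each contribute a bilinear form $ab$ whose mixed partial $\partial_a\partial_b$ equals $e_1$ and $e_2$, respectively (by the same telescoping identities that define $e_1,e_2$). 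Thus
\begin{align*}
\nabla^2 \Lambda(a,b) = \frac{1}{4}\begin{pmatrix} 2e_1 & e_1+e_2 \\ e_1+e_2 & 2e_2 \end{pmatrix},
\end{align*}
which is independent of $(a,b)$. Since $\Lambda$ is $C^2$ on an open convex neighborhood of the domain, convexity is equivalent to this Hessian being positive semidefinite.

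Finally I would read off the PSD conditions. Non-negativity of the diagonal entries gives $e_1 \geq 0$ and $e_2 \geq 0$. The determinant is
\begin{align*}
4 e_1 e_2 - (e_1+e_2)^2 = -(e_1-e_2)^2,
\end{align*}
so PSD forces $e_1=e_2$. Combined with the diagonal conditions, this is exactly $e_1 = e_2 \geq 0$, which is therefore necessary and sufficient. The computation is routine; the only real obstacle is carefully bookkeeping the eight values $d(x,y,z)$ across the four blocks so that the off-diagonal Hessian entry is correctly identified as $\tfrac{1}{4}(e_1+e_2)$ rather than, say, $\tfrac{1}{2}e_1$ or $\tfrac{1}{2}e_2$.
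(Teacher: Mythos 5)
Your proof is correct and follows essentially the same route as the paper's: parameterize the binary kernel $W_{Y|X}$ by two scalars, observe that $\Lambda$ is a quadratic form in them with a constant Hessian, compute that Hessian, and read off positive semidefiniteness via the two diagonal minors and the determinant $-(e_1-e_2)^2$. The only differences are cosmetic: you use $a = W_{Y|X}(1|0)$, $b = W_{Y|X}(1|1)$ where the paper uses $\alpha = W_{Y|X}(0|0)$, $\beta = W_{Y|X}(0|1)$ (a harmless complementary reparameterization), and you get the overall scalar in front of the Hessian matrix as $\tfrac14$ where the paper writes $\tfrac12$. Your $\tfrac14$ is in fact the correct factor, since $p_X(x)p_X(x')=\tfrac14$ and the quadratic coefficient in each block contributes a further factor of $2$ upon twice differentiating, so $\partial^2_{\alpha}\Lambda = e_1/2$, not $e_1$; the paper's $\tfrac12$ appears to be a minor slip. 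Either way the scalar is positive and does not affect the positive-semidefiniteness analysis, so both arguments establish the stated criterion $e_1=e_2\ge 0$.
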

\begin{proof}
    The proof is found in Appendix \ref{apx:prop-1-proof}.
\end{proof}
\begin{example}
   Let $\mathcal{X} = \mathcal{Y} = \{0,1\}$ and let $p_X$ be the uniform distribution on $\mathcal{X}$. 
Let the distortion function be
\[
d(X,Y,\hat{Y}) = \mathds{1}(X\neq Y) + \gamma\, \mathds{1}(X\neq \hat{Y}),
\]
where $\mathds{1}$ is the indicator function and $\gamma \in [0,1]$ is a constant. 
For this choice of distortion function, it is easy to see that $\Lambda$ is an affine function and that
\[
D_{\text{min}} = \frac{\gamma}{2},
\qquad
D_{\text{max}} = \frac{\gamma+1}{2}.
\] 
\end{example}

The definition of $\mathtt{R}_{\mathrm{I}_2}(D)$ can be extended to Gaussian sources.

\begin{proposition}\label{ex:gua}
Let $X \sim \mathcal{N}(0,\sigma^2)$ and let the distortion function be
\[
d(x,y,\hat{y}) = (x-y)^2 + \gamma (x-\hat{y})^2,
\]
where $\gamma$ is a positive constant. 
%
Then 
the problem of finding $\mathtt{R}_{\mathrm{I}_2}(D)$ is infeasible for $D < \frac{\gamma^2+2\gamma}{1+\gamma}\,\sigma^2$. Moreover, for $D \geq \frac{\gamma^2+2\gamma}{1+\gamma}\,\sigma^2$,
\begin{align}
    \mathtt{R}_{\mathrm{I}_2}(D)=
    \begin{cases}
\frac{1}{2}\log_{2}\left(\frac{(1+\gamma)^{-1}\sigma^2}{D-D_{\text{min}}}\right)     & \text{if } D_{\text{min}}<D< \sigma^2(1+\gamma),\\
0    & \text{if } D\ge \sigma^2(1+\gamma).
\end{cases}
\end{align}
\end{proposition}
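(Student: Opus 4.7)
The plan is to exploit the Gaussian/quadratic structure to show that the distortion constraint in \eqref{def:W}, although generally nonlinear in $W_{Y|X}$, collapses into an ordinary mean-squared-error constraint depending only on the joint marginal $p_{X,Y}$, after which the problem reduces to the classical Gaussian rate-distortion problem. I would proceed in three steps: (i) simplify the $\hat Y$-term in the distortion using $\mathbb{E}[X]=0$ and the independence of $\hat Y$ from $X$; (ii) complete the square to reveal an affine reparametrization of the reconstruction; (iii) invoke Shannon's formula for the Gaussian source.

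For step (i), under $Q_{X,Y,\hat Y}$ the copy $\hat Y$ is independent of $X$ and has the same marginal as $Y$, and since $\mathbb{E}[X]=0$, one obtains $\mathbb{E}[(X-\hat Y)^2] = \sigma^2 + \mathbb{E}[Y^2]$. Substituting into $\mathbb{E}[d(X,Y,\hat Y)]\leq D$ gives
\begin{equation*}
\mathbb{E}[(X-Y)^2] + \gamma\,\mathbb{E}[Y^2] \;\leq\; D - \gamma\sigma^2,
\end{equation*}
which is linear in $p_{X,Y}$ and thus sidesteps the nonlinearity flagged after Theorem~\ref{fthm1}. Step (ii) uses the pointwise identity $(x-y)^2 + \gamma y^2 = (1+\gamma)\bigl(y - x/(1+\gamma)\bigr)^2 + \gamma x^2/(1+\gamma)$; taking expectations and applying $\mathbb{E}[X^2]=\sigma^2$ recasts the constraint as
\begin{equation*}
(1+\gamma)\,\mathbb{E}\!\left[\bigl(Y - \tfrac{X}{1+\gamma}\bigr)^{\!2}\right] \;\leq\; D - D_{\text{min}},
\end{equation*}
where $D_{\text{min}} = \gamma(\gamma+2)\sigma^2/(1+\gamma)$ absorbs all source-only constants. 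Infeasibility for $D < D_{\text{min}}$ is then immediate, since the left-hand side is nonnegative.

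For step (iii), define $X' \triangleq X/(1+\gamma)\sim\mathcal{N}\bigl(0,\sigma^2/(1+\gamma)^2\bigr)$. Because $X\leftrightarrow X'$ is a bijection, $I(X;Y)=I(X';Y)$, and the remaining optimization is precisely the classical rate-distortion problem for the Gaussian source $X'$ under squared-error distortion with budget $D' = (D-D_{\text{min}})/(1+\gamma)$. Shannon's formula gives $\tfrac{1}{2}\log(\sigma_{X'}^2/D')$ whenever $0 < D' \leq \sigma_{X'}^2$ and $0$ otherwise; substituting back yields the stated expression, with the threshold $D' = \sigma_{X'}^2$ translating to $D = (1+\gamma)\sigma^2$.

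The main obstacle I anticipate is step (ii): although the algebra itself is elementary, it requires spotting the correct affine reparametrization $Y \mapsto Y - X/(1+\gamma)$ that converts the combined (instantaneous plus memory) distortion into a single MSE. Equally critical is that this reduction to a classical Gaussian rate-distortion problem is what certifies optimality of a Gaussian test channel \emph{without} assuming it a priori, since the infimum in \eqref{Racc:def} ranges over all $W_{Y|X}$; absent the reduction, one would be left with the nonconvex, kernel-dependent feasibility region described in the remarks following Theorem~\ref{fthm1}.
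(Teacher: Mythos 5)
Your proof is correct, and it takes a genuinely different and cleaner route than the paper's. The paper first computes $D_{\text{min}}$ via Cauchy--Schwarz, and then establishes the rate lower bound directly through the chain
\begin{align*}
I(X;Y) \;\ge\; h(X) - h(X-\alpha Y) \;\ge\; \tfrac{1}{2}\log\!\Bigl(\tfrac{\sigma^2}{\mathbb{E}[(X-\alpha Y)^2]}\Bigr),
\end{align*}
optimizing over $\alpha$ to obtain a bound in terms of $(\mathbb{E}[Y^2],\mathbb{E}[XY])$, minimizing that bound under the linearized distortion constraint, and finally exhibiting a jointly Gaussian pair $(X,Y)$ that saturates it. In effect the paper re-derives the Gaussian rate-distortion lower bound from scratch for this generalized quadratic cost. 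You instead notice that the constraint, once $\hat Y$ is integrated out, is the ordinary second-moment functional $\mathbb{E}[(X-Y)^2] + \gamma\,\mathbb{E}[Y^2]$, complete the square to rewrite it as $(1+\gamma)\,\mathbb{E}[(Y - X/(1+\gamma))^2] \leq D - D_{\text{min}}$, and then invoke Shannon's Gaussian rate-distortion formula for the scaled source $X' = X/(1+\gamma)$ (using $I(X;Y)=I(X';Y)$ since the scaling is a bijection). This reduction simultaneously yields the infeasibility threshold $D_{\text{min}}$, the rate expression, and the zero-rate threshold $(1+\gamma)\sigma^2$, and it certifies Gaussian optimality by inheritance rather than by an ad hoc variational argument. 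The two buys are complementary: your argument is shorter and exposes the structural fact that the problem is the classical one in disguise; the paper's argument is self-contained and its $(a,b)$-parametrized bound would survive in settings where no clean affine reparametrization of $Y$ exists.

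One small point worth making explicit when you write this up: the simplification $\mathbb{E}[(X-\hat Y)^2] = \sigma^2 + \mathbb{E}[Y^2]$ uses $\mathbb{E}[X]=0$ but does not require $\mathbb{E}[Y]=0$, because the cross term is $-2\mathbb{E}[X]\mathbb{E}[\hat Y]=0$ regardless. This is precisely what lets the nonlinear $p_Y(\hat y)$-dependence in $\mathcal{W}_D$ collapse to a constraint depending on the joint law $p_{X,Y}$ alone, and it is the reason your step (ii) is available at all; you gesture at this in the proposal, but spelling it out removes any doubt that you are not silently imposing a zero-mean restriction on the test channel.
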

\begin{proof}
    The proof is found in Appendix \ref{apx:prop-2-proof}.
\end{proof}

%
The core of the proof lies in demonstrating the optimality of the Gaussian channel $W_{Y|X}$ when evaluating $\mathtt{R}_{\mathrm{I}_2}(D)$.

\subsection{Convexification over $D$}
Since \(R(D)\) is convex in \(D\) (see Remark~\ref{rmk1dd}), and $R(D)\leq \mathtt{R}_{\mathtt{I}_1}(D)$, the largest convex function below $\mathtt{R}_{\mathtt{I}_1}(D)$ is still an upper bound on $R(D)$, and may be strictly smaller than $\mathtt{R}_{\mathtt{I}_1}(D)$. A similar argument can be made for $\mathtt{R}_{\mathtt{I}_2}(D)$.
\begin{definition}
    The lower convex envelope of $\mathtt{R}_{\mathtt{I}_1}(D)$ is denoted $\underline{\mathtt{R}}_{\mathtt{I}_1}(D)$ and defined as the largest convex function that is everywhere less than or equal to $\mathtt{R}_{\mathtt{I}_1}(D)$; mathematically, 
    \begin{align}
        \label{eqn:lower-convex-envelope}
        \underline{\mathtt{R}}_{\mathtt{I}_1}(D)
       & := \sup \bigl\{ g(D) : g \text{ is convex and }\nonumber\\&~~g(y) \le \mathtt{R}_{\mathtt{I}_1}(y)\ \forall\, y \ge 0 \bigr\},
       ~ \forall D \geq D_{\text{min}} .
    \end{align}
    The lower convex envelope $\underline{\mathtt{R}}_{I_2}(D)$ of $\mathtt{R}_{I_2}(D)$ is defined similarly, substituting these quantities at the respective places in (\ref{eqn:lower-convex-envelope}).
\end{definition}

With this definition, we have
$$\underline{\mathtt{R}}_{\mathtt{I}_2}(D)\geq \underline{\mathtt{R}}_{\mathtt{I}_1}(D)\geq R(D) .$$

\begin{proposition}
    If the function $\Lambda$ in \eqref{conv:eq} is convex in $W_{Y|X}$, then $\mathtt{R}_{\mathtt{I}_2}(D)$ is convex in $D$ and hence $\underline{\mathtt{R}}_{\mathtt{I}_2}(D)=\mathtt{R}_{\mathtt{I}_2}(D)$.
\end{proposition}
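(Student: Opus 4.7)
The plan is to establish convexity of $\mathtt{R}_{\mathtt{I}_2}(D)$ directly from its definition as a constrained minimum, by showing that a convex combination of optimal kernels produces a feasible kernel whose mutual information dominates the convex combination of the individual mutual informations.

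First, I would fix two distortion levels $D_1, D_2 \geq D_{\text{min}}$ and let $\lambda \in [0,1]$. Let $W_1, W_2 \in \mathcal{W}_{D_1}, \mathcal{W}_{D_2}$ respectively achieve (or come arbitrarily close to achieving) the values $\mathtt{R}_{\mathtt{I}_2}(D_1)$ and $\mathtt{R}_{\mathtt{I}_2}(D_2)$; since the kernel space is a compact set when $\mathcal{X},\mathcal{Y}$ are finite and $I(X;Y)$ is continuous, assume the minima are achieved (otherwise work with an approximating sequence). Define the mixture kernel
\[
W_\lambda(y|x) \;=\; \lambda W_1(y|x) + (1-\lambda) W_2(y|x).
\]

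Next I would check that $W_\lambda$ is feasible for the mixed distortion. By the hypothesis that the quadratic functional $\Lambda$ from \eqref{conv:eq} is convex in $W_{Y|X}$, we get
\[
\Lambda(W_\lambda) \;\leq\; \lambda \Lambda(W_1) + (1-\lambda)\Lambda(W_2) \;\leq\; \lambda D_1 + (1-\lambda)D_2,
\]
so $W_\lambda \in \mathcal{W}_{\lambda D_1 + (1-\lambda)D_2}$. Then I would invoke the standard fact that for a fixed input distribution $p_X$, the mutual information $I(X;Y)$ is a convex function of the channel $W_{Y|X}$. Applying this to $W_\lambda$,
\[
I(X;Y)\bigl|_{W_\lambda} \;\leq\; \lambda I(X;Y)\bigl|_{W_1} + (1-\lambda) I(X;Y)\bigl|_{W_2}.
\]
Combining the two displays,
\[
\mathtt{R}_{\mathtt{I}_2}\bigl(\lambda D_1 + (1-\lambda) D_2\bigr) \;\leq\; I(X;Y)\bigl|_{W_\lambda} \;\leq\; \lambda \mathtt{R}_{\mathtt{I}_2}(D_1) + (1-\lambda) \mathtt{R}_{\mathtt{I}_2}(D_2),
\]
which is exactly convexity. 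Once convexity is established, by the definition of the lower convex envelope, $\underline{\mathtt{R}}_{\mathtt{I}_2}(D) = \mathtt{R}_{\mathtt{I}_2}(D)$ immediately.

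The only real obstacle is the bookkeeping around the feasibility constraint, since $\mathcal{W}_D$ is defined by a quadratic inequality rather than a linear one, and $W_\lambda$ induces a marginal $p_{Y,\lambda}(\hat y) = \lambda p_{Y,1}(\hat y) + (1-\lambda) p_{Y,2}(\hat y)$ that is \emph{not} the convex combination of $p_{Y,1}(\hat y)p_{Y,2}(\hat y)$ cross terms in a naive way; this is precisely the step where the assumed convexity of $\Lambda$ is indispensable, as it absorbs those cross terms. Everything else follows from the well-known convexity of $I(X;Y)$ in $W_{Y|X}$, so no further machinery is needed.
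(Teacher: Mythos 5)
Your proof is correct and matches the paper's argument essentially line for line: mix the two optimal kernels, use the assumed convexity of $\Lambda$ to show the mixture is feasible for the averaged distortion, and then use convexity of $I(X;Y)$ in $W_{Y|X}$ to bound the value at the mixture. Your closing remark about the quadratic constraint being exactly where the $\Lambda$-convexity hypothesis is needed is a faithful restatement of the paper's own motivation.
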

%

\begin{proof}The proof is found in Appendix \ref{apx:prop-3-proof}.\end{proof}

\section{Outer bounds}
\subsection{Infeasibility via convex envelope}

\begin{theorem}\label{con:acc:th}
    Define
    \begin{align}
        \mathtt{d} = \max_{x,y,z,t \in \mathcal{X} \times \mathcal{Y} \times \mathcal{Y} \times \mathcal{Y}}
        \bigl[ d(x,y,z) - d(x,y,t) \bigr]. \label{new:d}
    \end{align}
    Then, for any coding scheme $P_{Y^n|X^n}$ satisfying
    \begin{align}
        \limsup_{n\to\infty} \mathbb{E}\!\left[ \frac{1}{n}\sum_{i=1}^{n} d(X_i,Y_i,Y_{i-1}) \right] \le D, \label{conv:ass}
    \end{align}
    the achievable rate $R$ for $D \geq D_{\text{min}} -\mathtt{d} $ must obey
    \begin{align}
        R \ge \underline{\mathtt{R}}_{\mathtt{I}_2}(D + \mathtt{d}).
    \end{align}
    Thus for $D \geq D_{\text{min}} -\mathtt{d} $, we have 
    \begin{align}
\underline{\mathtt{R}}_{\mathtt{I}_2}(D)\geq R(D)\geq \underline{\mathtt{R}}_{\mathtt{I}_2}(D + \mathtt{d}). \label{lower:ach:acc}
    \end{align}
\end{theorem}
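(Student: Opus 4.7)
The plan is a per-coordinate single-letterization combined with a decoupling step that replaces the actual previous reconstruction $Y_{i-1}$ by an independent sample drawn from the marginal of $Y_i$, paying a uniform price of $\mathtt{d}$ per time step by the very definition in \eqref{new:d}. Once each decoupled per-step kernel is seen to lie in $\mathcal{W}_{\bar d_i}$ for an appropriate $\bar d_i$, convexity of $\underline{\mathtt{R}}_{\mathtt{I}_2}$ lets me aggregate the per-step mutual-information bounds by Jensen's inequality, which yields the claimed single-letter lower bound.

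Concretely, for each $i\in\{1,\dots,n\}$ set $W_i(y\mid x)=P(Y_i=y\mid X_i=x)$, and let $p_{Y_i}(y)=\sum_{x}p_X(x)W_i(y\mid x)$, which matches \eqref{def:p:Y:acc}. Write $d_i=\mathbb{E}[d(X_i,Y_i,Y_{i-1})]$ and $\bar d_i=\mathbb{E}_{Q_i}[d(X,Y,\hat Y)]$ under $Q_i(x,y,\hat y)=p_X(x)W_i(y\mid x)p_{Y_i}(\hat y)$. Because $d(x,y,z)-d(x,y,t)\le \mathtt{d}$ holds uniformly in all four arguments by \eqref{new:d} (so the fixed initial symbol $Y_0$ appearing in $d_1$ is harmless), a pointwise comparison under any coupling of $\hat Y$ with $Y_{i-1}$ gives $\bar d_i\le d_i+\mathtt{d}$. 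Consequently $W_i\in\mathcal{W}_{\bar d_i}$, and therefore $I(X_i;Y_i)\ge \mathtt{R}_{\mathtt{I}_2}(\bar d_i)\ge \underline{\mathtt{R}}_{\mathtt{I}_2}(\bar d_i)$. The standard chain-rule bounds $nR\ge I(X^n;Y^n)\ge \sum_i I(X_i;Y_i)$ (the first because $Y^n$ is a function of a message taking at most $2^{\lfloor nR\rfloor}$ values, the second from the i.i.d.\ source) then give $R\ge \tfrac{1}{n}\sum_i \underline{\mathtt{R}}_{\mathtt{I}_2}(\bar d_i)$. Jensen's inequality, the monotonicity of $\underline{\mathtt{R}}_{\mathtt{I}_2}$, and the hypothesis \eqref{conv:ass} then yield $R\ge \underline{\mathtt{R}}_{\mathtt{I}_2}\bigl(\tfrac{1}{n}\sum_i \bar d_i\bigr)\ge \underline{\mathtt{R}}_{\mathtt{I}_2}(D+\mathtt{d})$ after passing to the limit.

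The main obstacle I foresee is technical rather than conceptual: I need to verify that $\underline{\mathtt{R}}_{\mathtt{I}_2}$ inherits the non-increasing property from $\mathtt{R}_{\mathtt{I}_2}$ (which it does, because $\mathtt{R}_{\mathtt{I}_2}$ vanishes for $D\ge D_{\max}$, forcing any convex minorant to be non-increasing) and is continuous at $D+\mathtt{d}$, so that the $\limsup$ in \eqref{conv:ass} can be converted into a finite-$n$ bound with $+\epsilon$ slack and then pushed to the limit. The assumption $D\ge D_{\min}-\mathtt{d}$ is exactly what keeps us in the interior of the feasible domain where this continuity is automatic from convexity, and also ensures $\tfrac1n\sum_i \bar d_i\ge D_{\min}$ so that $\underline{\mathtt{R}}_{\mathtt{I}_2}$ is evaluated where it is defined. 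All remaining steps are routine; the conceptual content is isolated in $\mathtt{d}$, which measures exactly the worst-case penalty for mismatching the memory argument of $d(\cdot,\cdot,\cdot)$ and is precisely what lets us convert a code with memory into a memoryless surrogate at controlled cost.
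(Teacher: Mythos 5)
Your proof is correct and follows essentially the same route as the paper's own argument: single-letterize via $nR\ge I(X^n;Y^n)\ge \sum_i I(X_i;Y_i)$, decouple $Y_{i-1}$ into an independent copy $\hat Y_{i-1}\sim p_{Y_i}$ at a cost of at most $\mathtt{d}$ per step, invoke $I(X_i;Y_i)\ge\mathtt{R}_{\mathtt{I}_2}(\bar d_i)\ge\underline{\mathtt{R}}_{\mathtt{I}_2}(\bar d_i)$, and finish with monotonicity plus Jensen applied to the convex envelope $\underline{\mathtt{R}}_{\mathtt{I}_2}$. The only cosmetic difference is where Jensen is applied (you carry $\bar d_i$ through Jensen while the paper first bounds $\bar d_i\le d_i+\mathtt{d}$ and then applies Jensen to $\underline{\mathtt{R}}_{\mathtt{I}_2}(\mathtt{d}+d_i)$), which is immaterial.
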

\begin{proof}
    The proof is found in Appendix \ref{apx:thm-3-proof}.
\end{proof}

Note that, when $D \geq D_{\text{max}} - \mathtt{d}$, the lower bound in \eqref{lower:ach:acc} becomes trivial.
\begin{remark}
Note that, when $d(x,y,\hat y)$ depends only on $(x,y)$, we have $\mathtt{d} = 0$ and the lower and upper bounds of $R(D)$ coincide. This recovers the classical rate-distortion bound where the distortion depends on $X_i$ and $Y_i$, and not the past symbol $Y_{i-1}$. 
\end{remark}
\subsection{Infeasibility via relaxation}
We relax the original problem by allowing the decoder $\mathcal{D}$ to map each message $m \in \mathcal{M} = \{1,2,\dots,2^{\lfloor nR \rfloor}\}$ to two reconstruction sequences $y^n, \hat{y}^n \in \mathcal{Y}^n$. 
The average distortion is measured as
\begin{equation}
  \bar{\mathsf{d}}(x^n, y^n, \hat{y}^n) \triangleq \frac{1}{n} \sum_{i=1}^{n} d(x_i, y_i, \hat{y}_i).
\end{equation}
If we restrict the decoder to choose $\hat{y}_i = y_{i-1}$, this reduces to the original problem formulation, whereas here we permit the decoder to select the sequences $y^n$ and $\hat{y}^n$ independently. 
The solution of this relaxed problem yields a lower bound on the original problem. 

Observe that the relaxed problem is a standard rate–distortion problem with reproduction alphabet $\tilde{\mathcal{Y}} = \mathcal{Y} \times \mathcal{Y}$ and reproduction symbol $\tilde{Y} = (Y,\hat{Y})$, where the distortion at time $i$ depends only on the current symbols $(x_i,\tilde{y}_i) = (x_i,y_i,\hat{y}_i)$. 
Hence any achievable rate $R$ must satisfy
\[
R \;\geq\; \mathtt{R}_{O_1}(D),
\]
where
\begin{align}
    \mathtt{R}_{O_1}(D)
    &= \min_{W_{Y,\hat{Y}|X} \in \mathcal{V}_D} I(X;Y,\hat{Y}), \label{reach:prob}
\end{align}
and
\begin{align}
    \mathcal{V}_D
    &\triangleq \Bigl\{ W_{Y,\hat{Y}|X} : \mathbb{E}_{p_X(x) W_{Y,\hat{Y}|X}(y,\hat{y}\mid x)}\bigl[d(X,Y,\hat{Y})\bigr] \leq D \Bigr\}. \label{def:V}
\end{align}

\begin{remark}
In the above construction, we introduce $\hat{y}^n$ as a new variable, which allows us to replace $d(x_i,y_i,y_{i-1})$ with $d(x_i,y_i,\hat{y}_i)$, eliminating the memory for each $i$, at the cost of relaxing the problem and obtaining an outer bound. It is also possible to just remove the memory, say in even time instances. For instance, if we assume that the decoder is restricted to choose $\hat{y}_{i} = y_{i-1}$ for even $i$, but has no restriction on $\hat{y}_{i}$ for odd $i$, one can obtain better infeasibility results: 
\begin{align}
    \mathtt{R}_{O_2}(D) = \min \frac{1}{2} I(X_1,X_2; Y_1,Y_2,\hat{Y}_1),
\end{align}
where the minimum is taken over all kernels $W_{Y_1,Y_2,\hat{Y}_1|X_1,X_2}$ such that under the joint distribution $p_{X}(x_1)p_X(x_2)W_{Y_1,Y_2,\hat{Y}_1|X_1,X_2}(y_1,y_2,\hat{y}_1|x_1,x_2)$, we have
\begin{equation*}
    \mathbb{E}\bigl[d(X_1,Y_1,\hat{Y}_1) + d(X_2,Y_2,Y_1)\bigr] \leq 2D.
\end{equation*}
\end{remark}

\section{Numerical Examples}
\label{sec:Examples}

Consider a problem with binary $\{0,1\}$ source symbols and representations, and Hamming distortion, i.e. unit distortion for $x_i \neq y_i$, and zero distortion otherwise. However, the decoder outputs $y_i$ have the property that $(y_{i-1},y_i) = (0,1)$ has a cost of $c$, perhaps because generating a string of $1$ has a one-time startup cost. Then the subjective distortion function can be represented using two matrices, as follows:
\begin{figure}
    \centering
    \includegraphics[width=0.9\columnwidth]{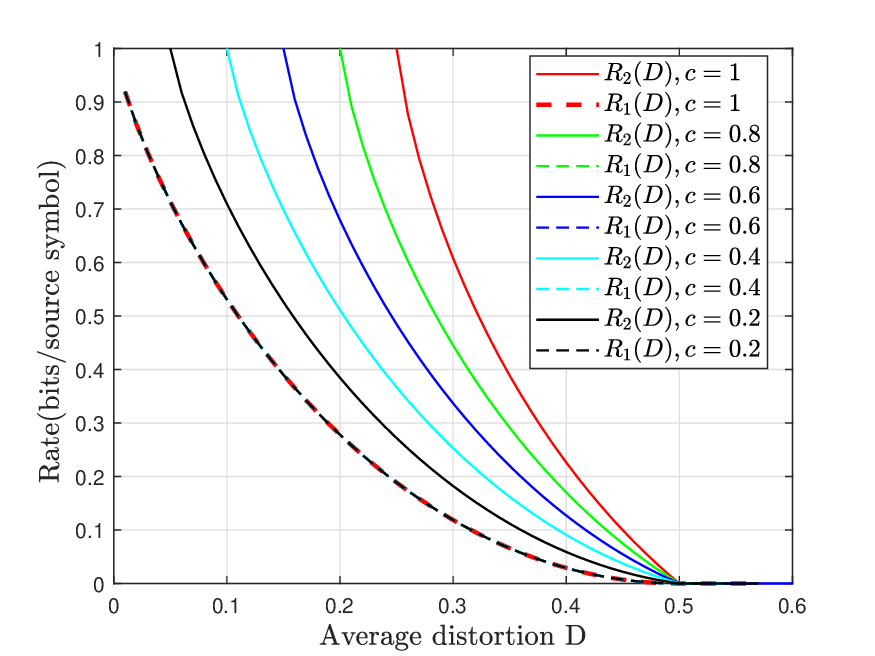}
    \caption{Example of outer bound $R_{1}(D)$ and inner bound $R_{2}(D)$ for the distortion function given in (\ref{eqn:distortion-example-1})-(\ref{eqn:distortion-example-2}), with various values of the cost $c$. A uniform prior is applied to the source. Note that the outer bound is invariant to $c$.}
    \label{fig:Figure2}
\end{figure}
\begin{align}
    \label{eqn:distortion-example-1}
    d(x_i,y_i,y_{i-1} = 0) =  & \left[\begin{array}{cc} 0 & 1+c \\ 1 & c \end{array}\right],\\ 
    \label{eqn:distortion-example-2}
    d(x_i,y_i,y_{i-1} = 1) =  & \left[\begin{array}{cc} 0 & 1 \\ 1 & 0 \end{array}\right],
\end{align}
with $x_i$ indexed to the rows and $y_i$ to the columns. The above example is a simplified model of {\em carbon catabolite repression} \cite{aidelberg2014hierarchy}, in which bacteria must guess which sugar will be available in a nutrient supply, and express the appropriate gene to metabolize it; making a mistake is costly, but so is starting a genetic circuit to metabolize a new sugar. Obviously this system cannot query a genie, along the lines of our setup in Section \ref{sec:setup}, but our goal is to establish absolute performance limits for this type of problem. Future work will consider how a simplified encoder and decoder may achieve the limits, such as by using a single-letter code.

For convenience in labeling the plots, define:
\begin{align}
    R_1(D) &\triangleq \max\Big\{\mathtt{R}_{O_1}(D),\mathtt{R}_{\mathtt{I}_2}(D+\mathtt{d}),0\Big\} , \\
    R_2(D) &\triangleq \mathtt{R}_{\mathtt{I}_1}(D) ,
\end{align}
so that $R_1(D)$ is the best of the outer bounds (excluding $O_2$ or higher-order relaxation), and $R_2(D)$ is the best of the inner bounds.
%
%
Then, for any  distortion $D$ we must have
\begin{align}
    R_{1}(D)\leq R(D)\leq R_{2}(D).
\end{align}
%
For this example, the outer and inner bounds ($R_{1}(D)$ and $R_{2}(D)$) are plotted in Fig.~\ref{fig:Figure2} for various values of $c$, assuming a uniform prior on the source $X$. 
In this setting, we observe that $R_{1}(D) = \mathtt{R}_{O_1}(D)$ and that $\mathtt{R}_{O_1}(D)$ does not depend on $c$ for all positive $c$.
To highlight this invariance, the curve corresponding to $c=1$ is drawn with a thicker line, and all $R_{1}(D)$ curves coincide over the entire distortion range.

\color{black}

\appendix

\subsection{Expansion of Remark 1}
\label{apx:Remark-1}

Here we show that $R(D)$ is convex. Following (\ref{eqn:limsup}) and the discussion after the equation, let
\begin{align}
    \label{eqn:RD-definition}
    \nonumber\lefteqn{R(D) =}&\\ 
    &\inf \left\{R : \limsup_{n \to \infty} \mathbb{E}\Bigl[ \bar{\mathsf{d}}\bigl( X^n, \mathcal{D}_n^{(R)} \circ \mathcal{E}_n^{(R)}(X^n) \bigr) \Bigr] \leq D \right\} ,
\end{align}
where we write the decoder and encoder as $\mathcal{D}_n^{(R)}$ and $\mathcal{E}_n^{(R)}$, respectively, to emphasize that they operate at rate $R$. Consider $D_1$ and $D_2$, with respective rates $R(D_1)$ and $R(D_2)$.

Consider the following time-sharing scheme:
\begin{enumerate}
    \item Let $X$ be a source, and let $X_1,X_2,\ldots,X_n$ be $n$ letters from the source. Let $0 \leq \lambda \leq 1$ be a constant, and let $\hat{\lambda} = \frac{\lfloor \lambda n \rfloor}{n}$.
    
    \item Divide the source into two parts, the first part consisting of letters $X_1,\ldots,X_{\hat{\lambda}n}$, and the second part consisting of $X_{\hat{\lambda}n +1},\ldots,X_n$. Note that the first part contains $\hat{\lambda}n$ letters, and the second part contains $(1-\hat{\lambda})n$ letters.
    
    \item Let $(\mathcal{E}_n^{(R(D_1))},\mathcal{D}_n^{(R(D_1))})$ and $(\mathcal{E}_n^{(R(D_2))},\mathcal{D}_n^{(R(D_2))})$ be the encoder-decoder pairs satisfying (\ref{eqn:RD-definition}) for $D_1$ and $D_2$, respectively. For the two parts of the source in step 2, encode the first part with $(\mathcal{E}_n^{(R(D_1))},\mathcal{D}_n^{(R(D_1))})$ and the second part with $(\mathcal{E}_n^{(R(D_2))},\mathcal{D}_n^{(R(D_2))})$.
\end{enumerate}
The rate of this scheme is given by 
\begin{align}
    \bar{R} \triangleq R(D_1) \hat{\lambda} + R(D_2) (1-\hat{\lambda})
\end{align}
bits per source symbol, while the expected distortion can be written (see (\ref{eqn:average-distortion})-(\ref{eqn:limsup})):
\begin{align}
    \nonumber\lefteqn{\mathbb{E}\Big[\bar{\mathsf{d}}(X^n,Y^n)\Big]} & \\
    &= 
    \frac{1}{n} \mathbb{E}\left[\sum_{i=1}^{\hat{\lambda} n} d(X_i, Y_i, Y_{i-1}) + \sum_{i=\hat{\lambda} n + 1}^{n} d(X_i, Y_i, Y_{i-1})\right] \nonumber\\
    &= 
    \frac{1}{n} \mathbb{E}^{(1)}\left[\sum_{i=1}^{\hat{\lambda} n} d(X_i, Y_i, Y_{i-1})\right] \nonumber\\
    &+\frac{1}{n} \mathbb{E}\big[d(X_{\hat{\lambda} n + 1}, Y_{\hat{\lambda} n + 1}, Y_{\hat{\lambda} n})-d(X_{\hat{\lambda} n + 1}, Y_{\hat{\lambda} n + 1}, y_{0})\big]\nonumber\\&+\mathbb{E}^{(2)}\left[d(X_{\hat{\lambda} n + 1}, Y_{\hat{\lambda} n + 1}, y_{0})+\sum_{i=\hat{\lambda} n + 2}^{n} d(X_i, Y_i, Y_{i-1}) \right]\nonumber\\
    \label{eqn:bar-D}
    &\leq \hat{\lambda} D_1 + (1-\hat{\lambda}) D_2+ 
    \\
    &+\frac{1}{n} \mathbb{E}\big[d(X_{\hat{\lambda} n + 1}, Y_{\hat{\lambda} n + 1}, Y_{\hat{\lambda} n})-d(X_{\hat{\lambda} n + 1}, Y_{\hat{\lambda} n + 1}, y_{0})\big]\nonumber,
\end{align}
where $y_0$ is the fixed symbol used in the definition of \eqref{eqn:average-distortion}, 
$\mathbb{E}^{(1)}$ takes the expectation with respect to $(\mathcal{E}_n^{(R(D_1))},\mathcal{D}_n^{(R(D_1))})$, and $\mathbb{E}^{(2)}$ takes it with respect to $(\mathcal{E}_n^{(R(D_2))},\mathcal{D}_n^{(R(D_2))})$, and where (\ref{eqn:bar-D}) is an application of Definition \ref{defn:1}. Define $\bar{D} \triangleq D_1 \hat{\lambda} + D_2 (1-\hat{\lambda})$. Since the distortion function is bounded, we have
$$\lim_{n\rightarrow\infty}\frac{1}{n} \mathbb{E}\big[d(X_{\hat{\lambda} n + 1}, Y_{\hat{\lambda} n + 1}, Y_{\hat{\lambda} n})-d(X_{\hat{\lambda} n + 1}, Y_{\hat{\lambda} n + 1}, y_{0})\big]=0.$$
Thus, the pair $(\bar{R},\bar{D})$ is achievable by time-sharing, and is included in the set under the $\limsup$ in (\ref{eqn:RD-definition}). Thus, $R(\bar{D}) \leq \bar{R}$, which is sufficient to show that $R(D)$ is convex.

\subsection{Proof of Proposition 1}
\label{apx:prop-1-proof}

    We parameterize kernel  $W_{Y|X}$ as the matrix
    \begin{align}
           W_{Y|X} = \begin{pmatrix}
        \alpha & 1-\alpha \\[2pt]
        \beta  & 1-\beta
    \end{pmatrix},
    \qquad \alpha,\beta \in [0,1].\label{2:v:ker}
    \end{align}
 
    We examine the Hessian of function $\Lambda$
    with respect to $(\alpha,\beta)$.  Computing second derivatives gives
    \begin{align}
        \operatorname{Hess}(\Lambda) =\frac{1}{2} \begin{pmatrix}
            2e_1 & e_1+e_2 \\[4pt]
            e_1+e_2 & 2e_2
        \end{pmatrix},
    \end{align}
    where
    \begin{align*}
        e_1 &= d(0,0,0) + d(0,1,1) - d(0,1,0) - d(0,0,1), \\
        e_2 &= d(1,0,0) + d(1,1,1) - d(1,1,0) - d(1,0,1).
    \end{align*}
    For $\operatorname{Hess}(\Lambda)$ to be positive semidefinite we require $e_1 > 0$, $e_2 > 0$, and $\det\bigl(\operatorname{Hess}(\Lambda)\bigr) \ge 0$.  
    A direct computation yields
    \[
        \det\bigl(\operatorname{Hess}(\Lambda)\bigr) = -\frac{1}{4}(e_1 - e_2)^2,
    \]
    which is non‑negative only when $e_1 = e_2$.  
    Hence $\lambda$ is  non-linear convex precisely when $e_1 = e_2 > 0$, 
    which completes the proof.

\subsection{Proof of Proposition 2}
\label{apx:prop-2-proof}

We first prove that $D_{\text{min}} = \frac{\gamma^2+2\gamma}{1+\gamma}\,\sigma^2$. 
We wish to minimize
\[
\mathbb{I}\triangleq \mathbb{E}\bigl[(X-Y)^2\bigr] + \gamma\, \mathbb{E}\bigl[(X-\hat{Y})^2\bigr]
\]
over all conditional distributions $W_{Y|X}$. 
Since $\hat{Y}$ is independent of $(X,Y)$ and has the same distribution as $Y$, the quantity $\mathbb{I}$ simplifies to
\[
\mathbb{I} = (1+\gamma)\sigma^2 + (1+\gamma)\mathbb{E}[Y^2] - 2\,\mathbb{E}[XY].
\]
Let $\mathbb{E}[Y^2]=b^2$. By the Cauchy–Schwarz inequality,
\[
\mathbb{E}[XY] \leq \sqrt{\mathbb{E}[X^2]\,\mathbb{E}[Y^2]} = \sigma b,
\]
which yields
\begin{align}
   \mathbb{I} 
   &\geq (1+\gamma)\sigma^2 + (1+\gamma)b^2 - 2\sigma b. \nonumber
\end{align}
The minimum of $(1+\gamma)\sigma^2 + (1+\gamma)b^2 - 2\sigma b$ over $b$ is equal to $\dfrac{\gamma^2+2\gamma}{1+\gamma}\sigma^2$. 
Thus, for every conditional distribution $W_{Y|X}$ we have
\[
\mathbb{I} \geq \frac{\gamma^2+2\gamma}{1+\gamma}\,\sigma^2.
\]
Moreover, by choosing $Y = \sqrt{\dfrac{\gamma^2+2\gamma}{1+\gamma}}\,X$ we achieve equality. Thus, $D_{\text{min}} = \frac{\gamma^2+2\gamma}{1+\gamma}\,\sigma^2$, which implies $\mathtt{R}_{\mathrm{I}_2}(D)$ is infeasible for $D < \frac{\gamma^2+2\gamma}{1+\gamma}\,\sigma^2$, thus proving the first part of the proposition.

To prove the second part of the proposition,
the constraint $\mathbb{E}\bigl[d(X,Y,\hat{Y})\bigr] \leq D$ simplifies to
\begin{align}
    (1+\gamma)\mathbb{E}[Y^2] - 2\mathbb{E}[XY] + (1+\gamma)\sigma^2 - D \leq 0. \label{condd::1}
\end{align}
Assume $D < (1+\gamma)\sigma^2$. 
For any conditional distribution $p_{Y|X}$ satisfying \eqref{condd::1}, we have
\begin{align}
I(X;Y)
&= h(X) - h(X|Y)\nonumber
  \\&= h(X) - h(X-\alpha Y \mid Y) \nonumber\\
&\overset{(a)}{\geq} h(X) - h(X-\alpha Y)\nonumber
 \\&\overset{(b)}{\geq} \frac{1}{2}\log\!\left(\frac{\sigma^2}{\mathbb{E}\bigl[(X-\alpha Y)^2\bigr]}\right), \label{ineq::1:1}
\end{align}
where $h(\cdot)$ denotes differential entropy. 
Step (a) uses the fact that conditioning reduces differential entropy, and step (b) follows because, for a fixed second moment, the Gaussian distribution maximizes differential entropy together with the expression of the differential entropy of a Gaussian random variable. 
Since \eqref{ineq::1:1} holds for every $\alpha \in \mathbb{R}$, it also holds when we maximize the right-hand side over $\alpha$, i.e.,
\begin{align}
I(X;Y)
&\geq \max_{\alpha} \frac{1}{2}\log\!\left(\frac{\sigma^2}{\mathbb{E}\bigl[(X-\alpha Y)^2\bigr]}\right). \label{ineq::1:2}
\end{align}
It is immediate that the maximizer satisfies
\[
\alpha^{\star} = \frac{\mathbb{E}[XY]}{\mathbb{E}[Y^2]},
\]
so substituting $\alpha^{\star}$ into \eqref{ineq::1:2} yields
\begin{align}
I(X;Y)
&\geq \frac{1}{2}\log\!\left(\frac{\sigma^2}{\sigma^2 - \frac{\mathbb{E}^2[XY]}{\mathbb{E}[Y^2]}}\right). \label{ineq::1:3}
\end{align}
Let
\[
\widetilde{D} \triangleq (1+\gamma)\sigma^2 - D,\qquad
a \triangleq \mathbb{E}[Y^2],\qquad
b \triangleq \mathbb{E}[XY].
\]
Then \eqref{ineq::1:3} can be rewritten as
\begin{align}
I(X;Y)
&\geq \frac{1}{2}\log\!\left(\frac{\sigma^2}{\sigma^2 - \frac{b^2}{a}}\right). \label{ineq::1:4}
\end{align}
Moreover, the constraint \eqref{condd::1} becomes
\[
(1+\gamma)a - 2b + \widetilde{D} \leq 0.
\]
Since \eqref{ineq::1:4} holds for every pair $(a,b)$ satisfying $(1+\gamma)a - 2b + \widetilde{D} \leq 0$, we obtain
\begin{align}
I(X;Y)
&\geq \min_{a,b :\; (1+\gamma)a - 2b + \widetilde{D} \leq 0}
     \frac{1}{2}\log\!\left(\frac{\sigma^2}{\sigma^2 - \frac{b^2}{a}}\right). \label{ineq::1:5}
\end{align}
A direct calculation shows that the minimizers of \eqref{ineq::1:5} are
\[
a^{\star} = \frac{\widetilde{D}}{1+\gamma},
\qquad
b^{\star} = \widetilde{D}.
\]
Substituting $(a^{\star}, b^{\star})$ into \eqref{ineq::1:5} yields
\begin{align}
    I(X;Y)
    &\geq \frac{1}{2}\log_{2}\!\left(\frac{(1+\gamma)^{-1}\sigma^2}{D - D_{\text{min}}}\right), \label{ineq::1:6}
\end{align}
where
\[
D_{\text{min}} \triangleq \frac{\gamma^2+2\gamma}{1+\gamma}\,\sigma^2.
\] 
Now consider $(X,Y)$ jointly Gaussian with covariance matrix
\[
\begin{pmatrix}
        \sigma^2 & \widetilde{D} \\[2pt]
        \widetilde{D} & \dfrac{\widetilde{D}}{1+\gamma}
\end{pmatrix}.
\]
A straightforward computation shows that equality holds in \eqref{ineq::1:6} for this choice, which completes the proof. 

\subsection{Proof of Proposition 3}
\label{apx:prop-3-proof}

Let $D_1$ and $D_2$ be distortions with respective inner bounds $\mathtt{R}_{\mathtt{I}_2}(D_1)$ and $\mathtt{R}_{\mathtt{I}_2}(D_2)$. For any $\lambda \in [0,1]$, and letting $\bar{D} \triangleq \lambda D_1 + (1-\lambda)D_2$, we require that 
\begin{align}
    \lambda \mathtt{R}_{\mathtt{I}_2}(D_1) + (1-\lambda) \mathtt{R}_{\mathtt{I}_2}(D_2) &\geq \mathtt{R}_{\mathtt{I}_2}(\bar{D}).
\end{align}

Let $W_{1,Y|X} \triangleq \arg \min_{W_{Y|X} \in \mathcal{W}_{D_1}} I(X;Y)$ be a solution for $\mathtt{R}_{\mathtt{I}_2}(D_1)$ in (\ref{Racc:def}), and similarly $W_{2,Y|X}$ for $\mathtt{R}_{\mathtt{I}_2}(D_2)$.
Define 
\begin{align}
    \bar{W}_{Y|X} &= \lambda W_{1,Y|X} + (1-\lambda) W_{2,Y|X} .
\end{align}
Using $\bar{W}_{Y|X}$, the average distortion is
\begin{align}
    \mathbb{E}_{\bar{W}_{Y|X}}\Big[d(X,Y,\hat{Y})\Big] 
    \label{eqn:prop-3-convex-1}
    &=
    \Lambda(\bar{W}_{Y|X}) \\
    \label{eqn:prop-3-convex-2}
    &\leq \lambda \Lambda(W_{1,Y|X}) + (1-\lambda) \Lambda(W_{2,Y|X}) \\
    \label{eqn:prop-3-convex-3}
    &\leq \lambda D_1 + (1-\lambda)D_2 \\
    &= \bar{D},
\end{align}
where: (\ref{eqn:prop-3-convex-1}) follows from the definition of $\Lambda$ in (\ref{conv:eq}); (\ref{eqn:prop-3-convex-2}) follows because $\Lambda$ is convex, by assumption; and (\ref{eqn:prop-3-convex-3}) follows from (\ref{def:W}) and the definitions of $D_1$ and $D_2$. Because $\Lambda(\bar{W}_{Y|X}) \leq \bar{D}$, it follows that $\bar{W}_{Y|X} \in \mathcal{W}_{\bar{D}}$. Thus, 
\begin{align}
    I_{p_X\bar{W}_{Y|X}}(X;Y) &\geq \min_{W_{Y|X} \in \mathcal{W}_{\bar{D}}} I(X;Y) \\
    &= \mathtt{R}_{\mathtt{I}_2}(\bar{D}).\label{eqnUb1s}
\end{align}
On the other hand, since $I(X;Y)$ is convex in $p_{Y|X}$ for a fixed input distribution $p_X$, we have
\begin{align}\nonumber
I_{p_X\bar{W}_{Y|X}}(X;Y)&\leq \lambda I_{p_X{W}_{1,Y|X}}(X;Y)\\&\nonumber\qquad+(1-\lambda)I_{p_X{W}_{2,Y|X}}(X;Y)
\\&=\lambda \mathtt{R}_{\mathtt{I}_2}(D_1) + (1-\lambda) \mathtt{R}_{\mathtt{I}_2}(D_2).\label{eqn:mutual-info-bound}
\end{align}
Finally, from \eqref{eqnUb1s} and (\ref{eqn:mutual-info-bound}), we deduce that
\begin{align*}
    \mathtt{R}_{\mathtt{I}_2}(\bar{D})\leq \lambda \mathtt{R}_{\mathtt{I}_2}(D_1) + (1-\lambda) \mathtt{R}_{\mathtt{I}_2}(D_2),
\end{align*}
establishing the convexity of $\mathtt{R}_{\mathtt{I}_2}(D)$ in $D$. To complete the proof, since $\underline{\mathtt{R}}_{\mathtt{I}_2}(D)$ is the largest convex function less than or equal to $\mathtt{R}_{\mathtt{I}_2}(D)$, this implies $\underline{\mathtt{R}}_{\mathtt{I}_2}(D) = \mathtt{R}_{\mathtt{I}_2}(D)$, which proves the proposition.

\subsection{Proof of Theorem 3}
\label{apx:thm-3-proof}

    The Markov chain $X^n \to M \to Y^n$ holds. Let $D\geq D_{1,\text{min}}-\mathtt{d}$. We have
    \begin{align}
        nR &\ge H(M) \overset{(a)}{\ge} I(X^n;Y^n) 
             = H(X^n) - H(X^n|Y^n)\nonumber \\
           &\overset{(b)}{=} \sum_{i=1}^{n} H(X_i) - H(X_i|Y^n, X^{i-1})\nonumber \\
           &\overset{(c)}{\ge} \sum_{i=1}^{n} H(X_i) - H(X_i|Y_i)
            = \sum_{i=1}^{n} I(X_i;Y_i),\label{acc:conv:1:1}
    \end{align}
    where (a) follows from the data‑processing inequality, (b) uses the i.i.d. property of $X^n$ together with the chain rule, and (c) holds because conditioning reduces entropy.  

   Let $p_{X_i,Y_i}$ denote the joint distribution of $(X_i,Y_i)$.  
    Independently generate $\hat{Y}_{i-1} \sim p_{Y_i}$ (i.e., from the same marginal as $Y_i$).  
    By the definition of $\mathtt{R}_{\mathrm{I}_2}(D)$ (see (\ref{Racc:def})):
    \[
        I(X_i;Y_i) \ge \mathtt{R}_{\mathrm{I}_2}\!\bigl( \mathbb{E}[d(X_i,Y_i,\hat{Y}_{i-1})] \bigr).
    \]
    Moreover,
    \begin{align}
        \mathbb{E}[d(X_i,Y_i,\hat{Y}_{i-1})]
        &= \mathbb{E}[d(X_i,Y_i,Y_{i-1})] 
           \nonumber\\&+ \mathbb{E}[d(X_i,Y_i,\hat{Y}_{i-1}) - d(X_i,Y_i,Y_{i-1})]\nonumber \\
        &\overset{(a)}{\le} \mathbb{E}[d(X_i,Y_i,Y_{i-1})] + \mathtt{d},
    \end{align}
where $(a)$ follows from the definition of $\mathtt{d}$ in \eqref{new:d}.
Since $\mathtt{R}_{\mathrm{I}_2}(\cdot)$ and its lower convex envelope $\underline{\mathtt{R}}_{\mathrm{I}_2}(\cdot)$ are non-increasing functions, from \eqref{acc:conv:1:1} we obtain
    \begin{align}
        nR &\ge \sum_{i=1}^{n} I(X_i;Y_i) 
             \ge \sum_{i=1}^{n} \mathtt{R}_{\mathrm{I}_2}\!\bigl( \mathbb{E}[d(X_i,Y_i,\hat{Y}_{i-1})] \bigr)\nonumber \\
           &\overset{(a)}{\ge} \sum_{i=1}^{n} \underline{\mathtt{R}}_{\mathrm{I}_2}\!\bigl( \mathtt{d} + \mathbb{E}[d(X_i,Y_i,Y_{i-1})] \bigr) \nonumber\\
           &= n \; \frac{1}{n} \sum_{i=1}^{n} 
                \underline{\mathtt{R}}_{\mathrm{I}_2}\!\bigl( \mathtt{d} + \mathbb{E}[d(X_i,Y_i,Y_{i-1})] \bigr)\nonumber \\
           &\overset{(b)}{\ge} n \; 
                \underline{\mathtt{R}}_{\mathrm{I}_2}\!\Bigl( \mathtt{d} + \frac{1}{n} \sum_{i=1}^{n} \mathbb{E}[d(X_i,Y_i,Y_{i-1})] \Bigr), \label{chain:1}
    \end{align}
    where $(a)$ uses $\mathtt{R}_{\mathrm{I}_2} \geq \underline{\mathtt{R}}_{\mathrm{I}_2}$ and $(b)$ follows from Jensen’s inequality applied to the convex function $\underline{\mathtt{R}}_{\mathrm{I}_2}(\cdot)$. 
Cancelling $n$ in \eqref{chain:1} and letting $n \to \infty$, the assumption \eqref{conv:ass}, together with the fact that $\underline{\mathtt{R}}_{\mathrm{I}_2}(\cdot)$ is non-increasing, yields
\[
    R \ge \underline{\mathtt{R}}_{\mathrm{I}_2}(\mathtt{d} + D),
\]
which completes the proof.


\bibliographystyle{ieeetr}
\bibliography{main}

@article{bartlett2025physics,
  title={Physics of life: Exploring information as a distinctive feature of living systems},
  author={Bartlett, Stuart and Eckford, Andrew W and Egbert, Matthew and Lingam, Manasvi and Kolchinsky, Artemy and Frank, Adam and Ghoshal, Gourab},
  journal={PRX Life},
  volume={3},
  number={3},
  pages={037003},
  year={2025},
  publisher={APS}
}

@article{koch2009channels,
  title={Channels that heat up},
  author={Koch, Tobias and Lapidoth, Amos and Sotiriadis, Paul P},
  journal={IEEE Transactions on Information Theory},
  volume={55},
  number={8},
  pages={3594--3612},
  year={2009},
  publisher={IEEE}
}

@book{HanBook,
  title={Information-spectrum methods in information theory},
  author={Han, Te Sun},
  publisher={Springer},
  year={2003}
}

@article{cuff2013distributed,
  title={Distributed channel synthesis},
  author={Cuff, Paul},
  journal={IEEE Transactions on Information Theory},
  volume={59},
  number={11},
  pages={7071--7096},
  year={2013},
  publisher={IEEE}
}

@article{aidelberg2014hierarchy,
  title={Hierarchy of non-glucose sugars in Escherichia coli},
  author={Aidelberg, Guy and Towbin, Benjamin D and Rothschild, Daphna and Dekel, Erez and Bren, Anat and Alon, Uri},
  journal={BMC systems biology},
  volume={8},
  pages={1--12},
  year={2014},
  publisher={Springer}
}

@article{donaldson2010fitness,
  title={The fitness value of information},
  author={Donaldson-Matasci, Matina C and Bergstrom, Carl T and Lachmann, Michael},
  journal={Oikos},
  volume={119},
  number={2},
  pages={219--230},
  year={2010},
  publisher={Wiley Online Library}
}

@article{wu2013interpretations,
  title={Interpretations arising from Wrightian and Malthusian fitness under strong frequency dependent selection},
  author={Wu, Bin and Gokhale, Chaitanya S and van Veelen, Matthijs and Wang, Long and Traulsen, Arne},
  journal={Ecology and Evolution},
  volume={3},
  number={5},
  pages={1276--1280},
  year={2013},
  publisher={Wiley Online Library}
}

@book{berger1971,
	Address = {Englewood Cliffs, NJ},
	Author = {T. Berger},
	Publisher = {Prentice-Hall},
	Title = {Rate distortion theory: A mathematical basis for data compression},
	Year = {1971}}

@inproceedings{zhao2021rabbit,
  title={Rabbit holes and taste distortion: Distribution-aware recommendation with evolving interests},
  author={Zhao, Xing and Zhu, Ziwei and Caverlee, James},
  booktitle={Proceedings of the web conference 2021},
  pages={888--899},
  year={2021}
}

@inproceedings{barker2024fitness,
  title={Fitness Value of Subjective Information for Living Organisms},
  author={Barker, Tyler S and Thomas, Peter J and Moffett, Alexander S and Eckford, Andrew W and Pierobon, Massimiliano},
  booktitle={Proceedings of the 11th Annual ACM International Conference on Nanoscale Computing and Communication},
  pages={54--59},
  year={2024}
}

@article{barker2022subjective,
  title={Subjective information and survival in a simulated biological system},
  author={Barker, Tyler S and Pierobon, Massimiliano and Thomas, Peter J},
  journal={Entropy},
  volume={24},
  number={5},
  pages={639},
  year={2022},
  publisher={MDPI}
}

@inproceedings{barker2023metric,
  title={A metric to quantify subjective information in biological gradient sensing},
  author={Barker, Tyler S and Thomas, Peter J and Pierobon, Massimiliano},
  booktitle={GLOBECOM 2023-2023 IEEE Global Communications Conference},
  pages={577--582},
  year={2023},
  organization={IEEE}
}

@inproceedings{liu2021rate,
  title={A rate-distortion framework for characterizing semantic information},
  author={Liu, Jiakun and Zhang, Wenyi and Poor, H Vincent},
  booktitle={2021 IEEE International Symposium on Information Theory (ISIT)},
  pages={2894--2899},
  year={2021},
  organization={IEEE}
}

@inproceedings{chai2023rate,
  title={Rate-distortion-perception theory for semantic communication},
  author={Chai, Jingxuan and Xiao, Yong and Shi, Guangming and Saad, Walid},
  booktitle={2023 IEEE 31st International Conference on Network Protocols (ICNP)},
  pages={1--6},
  year={2023},
  organization={IEEE}
}

@inproceedings{stavrou2022rate,
  title={A rate distortion approach to goal-oriented communication},
  author={Stavrou, Photios A and Kountouris, Marios},
  booktitle={2022 IEEE International Symposium on Information Theory (ISIT)},
  pages={590--595},
  year={2022},
  organization={IEEE}
}

@article{moffett2025kelly,
  title={Kelly Bets and Single-Letter Codes: Optimal Information Processing in Natural Systems},
  author={Moffett, Alexander S and Eckford, Andrew W},
  journal={IEEE Transactions on Molecular, Biological, and Multi-Scale Communications},
  year={2025},
  publisher={IEEE}
}

@article{gunduz2022beyond,
  title={Beyond transmitting bits: Context, semantics, and task-oriented communications},
  author={G{\"u}nd{\"u}z, Deniz and Qin, Zhijin and Aguerri, Inaki Estella and Dhillon, Harpreet S and Yang, Zhaohui and Yener, Aylin and Wong, Kai Kit and Chae, Chan-Byoung},
  journal={IEEE Journal on Selected Areas in Communications},
  volume={41},
  number={1},
  pages={5--41},
  year={2022},
  publisher={IEEE}
}

@article{rivoire2011value,
  title={The value of information for populations in varying environments},
  author={Rivoire, Olivier and Leibler, Stanislas},
  journal={Journal of Statistical Physics},
  volume={142},
  pages={1124--1166},
  year={2011},
  publisher={Springer}
}

@article{berger2002living,
  title={{Living information theory: The 2002 Shannon lecture}},
  author={Berger, Toby},
  journal={IEEE Information Theory Society Newsletter},
  volume={53},
  number={1},
  pages={1, 6-19},
  year={2002}
}

@article{gastpar2003code,
  title={To code, or not to code: Lossy source-channel communication revisited},
  author={Gastpar, Michael and Rimoldi, Bixio and Vetterli, Martin},
  journal={IEEE Transactions on Information Theory},
  volume={49},
  number={5},
  pages={1147--1158},
  year={2003},
  publisher={IEEE}
}

\end{document}